\newtheorem{theorem}{Theorem}[section]
\newtheorem{lemma}[theorem]{Lemma}
\newtheorem{definition}[theorem]{Definition}
\newtheorem{remark}[theorem]{Remark}
\newtheorem{example}[theorem]{Example}
\newcommand{\rref}[1]{{\rm (\ref{#1})}}
\newcommand{\esssup}{\operatorname*{\mathrm{ess\,sup}}}
\newcommand{\essinf}{\operatorname*{\mathrm{ess\,inf}}}
\newcommand{\cA}{{\cal A}}
\newcommand{\cF}{{\cal F}}
\newcommand{\cG}{{\cal G}}
\newcommand{\qed}{\mbox{ }~\hfill~$\Box$ \vspace{1ex} }
\newenvironment{proof}{\noindent{\sc Proof: }}{ \qed }
\newenvironment{proof3}[1]{\noindent{\sc Proof of Lemma #1: }}{ \qed  }
\newcommand{\rmII}{\text{\it I\kern-.08em I\,}}
\newcommand{\rmIII}{\text{\it I\kern-.08em I\kern-.08em I\,}}
\newcommand{\rmIV}{\text{\it I\kern-.08em V\,}}
\begin{document}

\title{The Foster--Hart  Measure of Riskiness for General Gambles}
\author{
Frank Riedel\thanks{This paper has been written while I was visiting Princeton University. Special thanks to Patrick Cheridito for hospitality.  Financial Support through the German Research
Foundation, International Graduate College ``Stochastics and Real World Models'', Research Training Group EBIM, ``Economic Behavior and Interaction Models''
  is gratefully acknowledged.} \and Tobias Hellmann \thanks{Institute of Mathematical Economics (IMW), Bielefeld University and Bielefeld Graduate School of Economics and Management (BiGSEM). Financial support provided by BiGSEM is gratefully acknowledged.}
}

\date{\today}
\maketitle

\begin{abstract}
\noindent Foster and Hart proposed an operational measure of riskiness for discrete random variables. We show that their defining equation has no solution for many common continuous distributions including many uniform distributions, e.g. We show how to extend consistently the definition of riskiness to continuous random variables. For many continuous random variables, the risk measure is equal to the worst--case risk measure, i.e. the maximal possible loss incurred by that gamble.  We also extend the Foster--Hart risk measure to dynamic environments for general distributions and probability spaces, and we show that the extended measure avoids bankruptcy in infinitely repeated gambles.
\end{abstract}

{\footnotesize{ \it Key words and phrases:} Risk Measures, Operational, Bankruptcy, Continuous Random Variable \\
{\it \hspace*{0.6cm} JEL subject classification: } D81, G11}

\newpage

\section{Introduction}

\cite{FoHa09} introduce a notion of riskiness, or critical wealth level, for gambles with known distribution. The measure is objective in so far as it depends only on the distribution of the outcome; in decision--theoretic terms, it is probabilistically sophisticated, in the language of the finance literature on risk measures, it is law--invariant. The concept  admits a simple operational interpretation because an agent avoids bankruptcy in the long run almost surely provided he accepts a gamble  only if his current wealth exceeds the critical value.

\cite{FoHa09} study only gambles with finitely many outcomes; as many financial applications involve distributions with simple densities like the uniform or lognormal distribution, it seems natural and important to generalize the concept of critical wealth level to such cases. To our surprise, we realized that even for the most simple case of a uniform distribution, the defining equation of Foster and Hart does not always have a finite solution. The non--finite value of infinity is always a possible solution for the defining equation, but it would seem most counterintuitive and implausible to reject a uniformly distributed gamble on, say, the interval $[-100,200]$ at arbitrary wealth levels.

We thus set out to study the concept of riskiness for distributions with densities. We show that there are two classes of gambles. For some of them, the defining equation of Foster and Hart has a finite solution, and one can use this number as its riskiness. For others, like the uniform one described above, the defining equation has no solution. We show that in this case, the right notion of riskiness is the \emph{maximal loss} of the gamble. This might seem surprising at first sight, as for finite gambles, the Foster--Hart riskiness is always strictly larger than the maximal loss. But we show that it is the only rational way to extend the concept to arbitrary gambles.

We approach the problem by approximating non--finite gambles by finite ones. We show that the corresponding sequence of critical wealth levels converges. For the uniform distribution above, the approximation involves a sequence of uniformly distributed finite gambles on a grid. As the grid becomes finer and finer, we show that the riskiness values converge indeed to the maximal loss. This result carries over to gambles with arbitrary distributions with a strictly positive density on their support. 

For gambles with a density where the Foster--Hart equation does have a solution, we also show that the riskiness values converge to that value. This is important as it provides a justification for using finite gambles with many outcomes as an approximation to non--finite gambles with a density.

We also provide another justification of  our extended definition of riskiness  by extending the no--bankruptcy result of Foster and Hart. Indeed, an investor who accepts gambles only if his wealth exceeds the critical wealth level will never go bankrupt. This implies, in particular, that an investor who accepts the above uniform gamble on $[-100,200]$ when his wealth exceeds $100 \$ $ will not go bankrupt. 

In order to prove the no--bankruptcy result in a formally correct way, we introduce the notion of the \textit{dynamic Foster--Hart riskiness} which is the dynamic extension of the static Foster--Hart riskiness.
Dynamic measurement of risk plays an important role in the recent literature \footnote{See, among others, \cite{DeSc05} and \cite{FoeSch11}, Chapter 11 for a detailed introduction to dynamic risk measures.} since it allows, in contrast to the static case, to measure risk of financial positions over  time. The arrival of new information can thus  be taken into account. This is important for many situations; for instance, if one faces a gamble that has its payments in, say, one month, there may arrive  more information in two weeks from now and the risk assessment can then be determined in a more precise way. Thus, this extension provides an important tool for the purpose of application of the Foster--Hart riskiness in a dynamic framework.

Our paper  provides also  a new and more general proof of the no--bankruptcy theorem. In particular, we remove the artificial assumption that all gambles are multiples of a finite number of basic gambles.

The paper is set up as follows. In the next section, we illustrate the motivating problem with the Foster--Hart equation. In Section 3, we approximate gambles with densities by discrete gambles. This allows us to use the original Foster--Hart definition of riskiness, and we study the limit behavior for the approximating sequences.  Section 4 contains the conditional version of the riskiness concept, and proves the no bankruptcy theorem.
Section 5 contains examples. 

\section{The Foster--Hart Riskiness and a Motivating Example}

For a random variable $X$ on some probability space $(\Omega,\cF,P)$ that satisfies $EX>0$ and $P(X<0)>0$, the  Foster--Hart riskiness is given by $R(X)=\frac{1}{\lambda},$ where $\lambda$ is the unique positive solution of the equation
\begin{equation}\label{EqnDefRiskiness}
  E \log\left(1+\lambda X\right) = 0\,.
\end{equation}

Note that for discrete random variables this equation is defined for all nonnegative values of $\lambda$ up to, but strictly smaller than  $\lambda^*(X)=1/L(X)$, where  $L(X)=\max_{\omega\in\Omega} (-X (\omega))$ is the maximal loss of the gamble.
For discrete random variables with positive expectation and possible losses, such a strictly positive solution always exists. For example, if $X$ is a Bernoulli random variable with
$$P(X=200)=P(X=-100)=\frac{1}{2}\,,$$
one can easily verify that
$0=\frac{1}{2} \log\left(1+200\lambda\right) + \frac{1}{2} \log\left(1-100\lambda\right)$
leads to
$\lambda=1/2000$ or $R(X)=2000$.

The starting point of our analysis is the following simple observation that struck us when we wanted to apply the measure of riskiness to more general, continuous distributions. Let $X$ be uniformly distributed on $[-100,200]$, say.  $X$ has a positive expectation of $50$ and incurs losses with positive probability, so it qualifies as a gamble in the sense of Foster and Hart.  Nevertheless, the equation \rref{EqnDefRiskiness} has no positive solution!

\begin{example}
\label{MotivatingEx}
Let $X$ be uniformly distributed over $[-100,200]$. $X$ has positive expectation and positive probability of losses.
  We plot the function $\phi(\lambda)= E \log\left(1+\lambda X\right)$ over the interval $[0,1/100]$ in Figure \ref{FigUniform}.  We see that no solution for $\lambda>0$ to \rref{EqnDefRiskiness} exists. To prove this, note that with $\lambda^*(X)=1/100$ we have   $$   E \log\left(1+\lambda^*(X) X\right)= \int_{-100}^{200} \frac{1}{300} \log\left(1+\frac{ x}{100}\right) dx = \log 3 - 1 \simeq 0.0986 > 0 \,.$$ As $\phi$ is concave and strictly increasing in $0$, we conclude that $\phi(\lambda)>0$ for all $\lambda \in (0,\lambda^*(X)]$.
\end{example}

\begin{figure}
 \centering
  \includegraphics[scale=0.5]{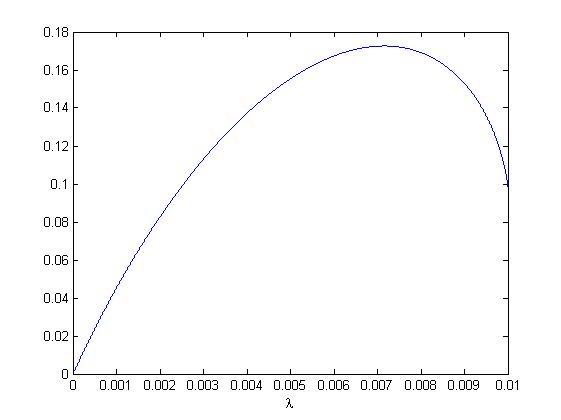}
  \caption{The function $\lambda \mapsto E \log\left(1+\lambda X\right)$ for the uniform distribution over $[-100,200]$ has no zero.}
  \label{FigUniform}
\end{figure}

How can we assign a riskiness to this simple gamble if the defining equation of Foster and Hart has no solution?
One could take $\lambda=0$, of course, resulting in a riskiness measure of $\infty$. Does this mean that one should never accept uniformly distributed gambles? Then an investor with a wealth of, say, a billion dollars would reject the above uniform gamble on $[-100,200]$.  Given that such a gamble has a nice expected gain of 50 and a maximal loss of 100, this would seem quite counterintuitive.

 In this note, we set out to extend the notion of riskiness for continuous random variables like the uniform above by approximating them via discrete random variables. We show that the limit of the riskiness coefficients exists. If the expectation
$E \log(1+\lambda^*(X) X ) $ is negative (including negative infinity), one can use the equation \rref{EqnDefRiskiness} to define the riskiness of $X$. (This also shows that our notion is the continuous extension of the discrete approach). For continuous random variables with  $E \log(1+\lambda^*(X) X ) \geq 0 $ such as our uniform random variable above, we use the limit of the riskiness coefficient of the approximating discrete random variables. This limit  turns out to be  equal to the maximal loss $L(X)$.

Whereas the riskiness measure is quite conservative for Bernoulli random variables as it prescribes a high value of $2000$ for the wealth for a Bernoulli random variable with maximal loss of $100$, it does accept the uniform random variable over $[-100,200]$, which has the same maximal loss of $100$, even if one has just $100$ \$.

How can one explain this stark difference? The point is that the Bernoulli random variable above is quite far away from the uniform random variable over the whole interval $[-100,200]$. For the Bernoulli case, a loss of $100$ \$ has a high probability of $50$ \%.  For the uniformly distributed random variable, a loss of close to $100$ \$ occurs with a very small probability and the loss of exactly $100$ \$ even with probability zero; as the defining aim of the operational measure of riskiness is to avoid bankruptcy, this is a crucial difference. Indeed, our analysis below shows that the riskiness decreases if we spread the Bernoulli random variable more uniformly over the interval $[-100,200]$, say by using a uniform grid. For discrete random variables uniformly distributed over a fine grid, the riskiness is close to  the maximal loss of $100$.

\section{Foster--Hart Riskiness for General Continuous Distributions}

In the remaining part of this note, we call a gamble a random variable $X$ on some probability space $(\Omega,\cF,P)$ with $EX>0$ and $P(X<0)>0$ that has either a finite support or admits a strictly positive density on its support which can either be the compact interval $[-L,M]$ or $[-L,\infty)$.

For a gamble $X$ let us study the function
$$\phi(\lambda)=  E \log\left(1+\lambda X\right) $$ which is well--defined for all $\lambda \in [0,L(X))$, where we remind the reader that we defined the maximal loss $L(X)=\max (-X)$ whenever $X$ is a discrete random variable and we define $L(X)=\esssup (-X)$ \footnote{We define as usual $\esssup (-X):=\inf\{x \in \mathbb R | P(-X>x)=0 \}$.} if $X$ is continuous.

The function $\phi(\lambda)$ is continuous and strictly concave, with strictly positive slope in $0$ as $E X>0$ (see the argument in \cite{FoHa09}). If $X$ is discrete, its distribution places a strictly positive weight on the event $\{X=-L(X)\}$, where $X$ achieves the maximal possible loss. As a consequence,  the expression $\log\left(1-\lambda L(X)\right)$ tends to minus infinity as $\lambda$ approaches the  value $\lambda^*(X)=1/L(X)$. The expectation that defines $\phi$ then also converges to negative infinity for $\lambda \to \lambda^*(X)$, and the function $\phi$ has a unique zero in $(0,\lambda^*(X))$.

For continuous random variables $X$, as the uniform one, the event $X=-L(X)$ has probability zero. The expectation $ E \log\left(1+\lambda X\right)$ is then well--defined even for $\lambda=\lambda^*(X)$, possibly with a value of negative infinity. When this last expression is positive, however, equation \rref{EqnDefRiskiness} has no positive solution. This is indeed the case for suitable uniform distributions as we showed above.

In the following we extend the riskiness in a rational way such that we can apply it to such continuous random variables. For this purpose we approximate them by discrete random variables and study the asymptotic behavior of their riskiness number.

Let us describe our strategy more formally. We approximate $X$ from below by an increasing sequence of discrete random variables $X_n \uparrow X$. For each $X_n$, there is a unique positive number $0<\lambda_n < \lambda^*(X)$ that solves the defining equation
$$E \log\left(1+\lambda_n X_n \right)=0\,.$$
The sequence $(\lambda_n)$ is increasing and bounded, thus the limit $\lambda_\infty=\lim_{n\to\infty} \lambda_n$ exists and is bounded by $\lambda^*(X)$. We then show that for random variables $X$ with
$\phi(\lambda^*(X))=E \log\left(1+\lambda^*(X)X\right)<0$, $\lambda_\infty$ is the unique positive solution of the defining equation \rref{EqnDefRiskiness}. For random variables like the uniform one, where
$\phi(\lambda^*(X))=E \log\left(1+\lambda^*(X)X\right)\ge 0$, we have $\lambda_\infty=\lambda^*(X)$.

Without loss of generality, we take $L=1$ and thus $\lambda^*(X)=1$. We consider two different sequences of partitions of the support of the continuous random variables. On the one hand, if the support of $X$ is the compact interval $[-1,M]$, we define
$$x^n_k = -1 + \frac{k}{2^n} (M+1), \qquad k=0,\ldots,2^n-1\,$$ and set
$$X_n = \sum_{k=0}^{2^n-1} x^n_k 1_{\left\{x^n_k \le X < x^n_{k+1}\right\}} \,.$$

On the other hand, if the support of $X$ is the infinite interval $[-1,\infty)$, we define
$$x^n_k = -1 + \frac{k}{2^n}, \qquad k=0,\ldots,n2^n-1\,$$ and set for $n \geq 1$
$$X_n = \sum_{k=0}^{n2^n-1} x^n_k 1_{\left\{x^n_k \le X < x^n_{k+1}\right\}} +(-1+n) 1_{\{X \geq (-1+n)\}}.$$

For both cases the next Lemma holds true.

\begin{lemma}\label{LemmaIncreasingApprox}
  The sequence $(X_n)$ is increasing and $\lim X_n = X$ $a.s.$
\end{lemma}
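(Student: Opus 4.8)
The plan is to prove the two assertions separately: first that $X_{n+1}\ge X_n$ pointwise, and then that $X_n(\omega)\to X(\omega)$ for every $\omega$ outside a single null set. Both will rest on one structural fact, namely that the dyadic grids are \emph{nested}: $x^{n+1}_{2k}=x^n_k$ for every admissible $k$. This I would check by the trivial computations $x^{n+1}_{2k}=-1+\frac{2k}{2^{n+1}}(M+1)=-1+\frac{k}{2^n}(M+1)=x^n_k$ in the compact case and $x^{n+1}_{2k}=-1+\frac{2k}{2^{n+1}}=x^n_k$ in the unbounded case. Consequently each level-$n$ cell $[x^n_k,x^n_{k+1})$ splits into the two level-$(n+1)$ cells $[x^{n+1}_{2k},x^{n+1}_{2k+1})$ and $[x^{n+1}_{2k+1},x^{n+1}_{2k+2})$, and $X_n$ is simply the grid floor of $X$.

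For monotonicity I would fix $\omega$ and argue by cases. If $X(\omega)$ lies in the gridded range, then $X(\omega)\in[x^n_k,x^n_{k+1})$ for a unique $k$, so $X_n(\omega)=x^n_k$, whereas $X_{n+1}(\omega)$ equals either $x^{n+1}_{2k}=x^n_k$ or $x^{n+1}_{2k+1}>x^n_k$ depending on which half of the cell contains $X(\omega)$; either way $X_{n+1}(\omega)\ge X_n(\omega)$. In the unbounded case it remains to handle $X(\omega)\ge -1+n$, where $X_n(\omega)=-1+n$: if $-1+n\le X(\omega)<-1+(n+1)$ then $X_{n+1}(\omega)$ is a grid floor lying in $[-1+n,-1+(n+1))$, hence $\ge -1+n$ (note $-1+n$ is itself a level-$(n+1)$ grid point), while if $X(\omega)\ge -1+(n+1)$ then $X_{n+1}(\omega)=-1+(n+1)>-1+n$. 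This gives $X_{n+1}\ge X_n$ everywhere.

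For convergence I would use the floor estimate $0\le X(\omega)-X_n(\omega)<\text{mesh}$ valid whenever $X(\omega)$ is in the gridded range, the mesh being $\frac{M+1}{2^n}$ in the compact case and $\frac{1}{2^n}$ in the unbounded case, and in both cases tending to $0$. In the compact case every $\omega$ with $X(\omega)<M$ sits in the gridded range for all $n$, so the estimate forces $X_n(\omega)\to X(\omega)$ there, and since $X$ admits a density the set $\{X=M\}$ is null. In the unbounded case, for each $\omega$ with $X(\omega)<\infty$ there is an index $N(\omega)$ with $X(\omega)<-1+n$ for all $n\ge N(\omega)$, so from that point on $\omega$ is in the gridded range and again $X_n(\omega)\to X(\omega)$; as $X$ is real-valued, $\{X=\infty\}$ is null. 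Discarding the relevant null set then yields $X_n\to X$ a.s.

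I do not expect any genuine difficulty here, since the statement is the classical monotone simple-function approximation of a random variable bounded below, specialized to dyadic grids. The only points needing care are the bookkeeping of the nested grids that drives the monotonicity, and the separate treatment of the truncation term $(-1+n)1_{\{X\ge -1+n\}}$ in the unbounded case, both for the monotonicity step and for confirming that every finite $\omega$ is eventually captured by the refining grid.
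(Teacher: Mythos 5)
Your proof is correct and takes essentially the same route as the paper's: both rest on the grid nesting $x^{n+1}_{2k}=x^n_k$, compare $X_n$ and $X_{n+1}$ cell by cell as dyadic floors, and get convergence from the vanishing mesh. You are in fact somewhat more thorough than the paper, which treats the unbounded support only ``analogously'' and asserts convergence ``by construction,'' whereas you explicitly handle the truncation term $(-1+n)1_{\{X\ge -1+n\}}$ and the null sets $\{X=M\}$ and $\{X=\infty\}$.
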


This is a well--known measure--theoretic construction. The proof is reported in the appendix.


As we have $-1\le X_n\le X \in L^1$,\footnote{We denote by $L^1$ the space of all random variables $X$ with $EX<\infty$.} we conclude, using Lebesgue dominated convergence theorem, that $\lim E X_n = E \lim X_n = E X >0$. Hence, for $n$ sufficiently large, we have $E X_n >0$. From now on, we always look at such large $n$ only.   As the density of $X$ is strictly positive on its support, we also have
$P(X_n<0)\ge P(X_n=-1)>0$. Therefore, the Foster--Hart riskiness is well--defined for $X_n$. Let $\lambda_n \in (0,1)$ be the unique positive solution of
$$E \log(1+\lambda_n X_n)=0\,.$$
The next Lemma follows directly by Lemma \ref{LemmaIncreasingApprox} and by the monotonicity of the Foster--Hart measure of riskiness, see Proposition 2 in \cite{FoHa09}.
\begin{lemma}
  The sequence $(\lambda_n)$ is increasing and bounded by $L(X)=1$. As a consequence, $$\lambda_\infty = \lim \lambda_n$$ exists
 and is less or equal to $L(X)=1$.
\end{lemma}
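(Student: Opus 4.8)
The plan is to treat this lemma as a direct corollary of Lemma~\ref{LemmaIncreasingApprox} together with the shape properties of $\phi$, handling the two asserted properties---monotonicity and boundedness---separately and then invoking the monotone convergence theorem for real sequences.

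First I would dispose of the bound. Each approximating variable satisfies $X_n \ge -1$ by construction, and since the density of $X$ is strictly positive near $-1$ the event $\{X_n = -1\}$ carries strictly positive probability; hence the maximal loss of $X_n$ equals $L(X_n) = 1$ and $\lambda^*(X_n) = 1$. Because $\{X_n = -1\}$ has positive mass, $E\log(1+\lambda X_n) \to -\infty$ as $\lambda \uparrow 1$, so the unique zero $\lambda_n$ of $\lambda \mapsto E\log(1+\lambda X_n)$ lies strictly in $(0,1)$. This already yields $\lambda_n < 1 = L(X)$ for every (large) $n$, i.e. the sequence is bounded above by $L(X)$.

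Next I would establish monotonicity. By Lemma~\ref{LemmaIncreasingApprox} we have $X_n \le X_{n+1}$ almost surely. I would prefer a self-contained argument over a bare appeal to Proposition~2 of \cite{FoHa09}, to avoid any ambiguity about the direction of the stochastic-dominance ordering: since $x \mapsto \log(1+\lambda x)$ is increasing for every $\lambda \ge 0$, the pointwise inequality $X_n \le X_{n+1}$ gives $E\log(1+\lambda X_n) \le E\log(1+\lambda X_{n+1})$ for all $\lambda \in [0,1)$. Evaluating at $\lambda_n$ yields $E\log(1+\lambda_n X_{n+1}) \ge E\log(1+\lambda_n X_n) = 0$. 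Since the function $\lambda \mapsto E\log(1+\lambda X_{n+1})$ is strictly concave, strictly positive on $(0,\lambda_{n+1})$ and strictly negative on $(\lambda_{n+1},1)$, this non-negativity forces $\lambda_n \le \lambda_{n+1}$. Equivalently, in the language of the riskiness, $R(X_{n+1}) \le R(X_n)$, which is exactly the monotonicity of the Foster--Hart measure.

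Finally, a bounded increasing sequence of reals converges, so $\lambda_\infty = \lim \lambda_n$ exists, and as a limit of numbers below $1$ it satisfies $\lambda_\infty \le 1 = L(X)$, completing the argument. I do not expect a genuine obstacle here: the two nontrivial inputs---that each $X_n$ is a bona fide gamble (so that $\lambda_n$ is well defined in $(0,1)$) and that $X_n \uparrow X$---have both been secured already, the former in the paragraph preceding the statement and the latter in Lemma~\ref{LemmaIncreasingApprox}. The only point demanding mild care is keeping the direction of the comparison straight, which is why I would spell out the $\phi$-comparison explicitly rather than quote the monotonicity result verbatim.
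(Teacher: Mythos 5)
Your proof is correct and follows essentially the same route as the paper: boundedness comes from $\lambda_n\in(0,1)$ (already secured in the paragraph preceding the lemma, via $P(X_n=-1)>0$ and $EX_n>0$ for large $n$), monotonicity of $(\lambda_n)$ comes from $X_n\le X_{n+1}$ together with monotonicity of the riskiness, and convergence follows from the monotone convergence theorem for real sequences. The only difference is that the paper simply cites Proposition~2 of \cite{FoHa09} for the monotonicity step, whereas you prove it inline by comparing $\lambda\mapsto E\log(1+\lambda X_n)$ and $\lambda\mapsto E\log(1+\lambda X_{n+1})$ and using strict concavity to locate the zeros --- a self-contained version of exactly the same argument.
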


We can now state our main theorem. The limit $\lambda_\infty$ identified in the previous lemma is the right tool to define riskiness for general gambles.

\begin{theorem}
\begin{enumerate}
  \item If $E \log(1+X)<0$, then $\lambda_\infty<1$ and $\lambda_\infty$ is the unique positive solution of \rref{FigUniform}.
  \item      If $E \log(1+X)\ge 0$, then $\lambda_\infty=L(X)=1$.
\end{enumerate}\end{theorem}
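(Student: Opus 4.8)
The whole argument rests on passing to the limit $n\to\infty$ in the defining equations $E\log(1+\lambda_n X_n)=0$, so the first thing I would isolate is a single convergence statement and reuse it in both parts: \emph{whenever $\lambda_\infty<1$, one has $E\log(1+\lambda_n X_n)\to E\log(1+\lambda_\infty X)=\phi(\lambda_\infty)$, and hence $\phi(\lambda_\infty)=0$.} To justify the interchange of limit and expectation I would dominate the integrand from both sides, uniformly in $n$. Since $X_n\ge -1$ and $\lambda_n\le\lambda_\infty<1$, we have $1+\lambda_n X_n\ge 1-\lambda_\infty>0$, so $\log(1+\lambda_n X_n)\ge\log(1-\lambda_\infty)$, a finite constant; and since $\log(1+y)\le y$ together with $X_n\le X$ gives $\log(1+\lambda_n X_n)\le X^+\in L^1$ (on $\{X_n\ge0\}$ directly via $\lambda_nX_n\le X_n\le X$, and on $\{X_n<0\}$ the left side is negative). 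With a constant lower bound and an $L^1$ upper bound, dominated convergence applies to the a.s. convergent sequence $\log(1+\lambda_nX_n)\to\log(1+\lambda_\infty X)$, and since the left-hand sides vanish we get $\phi(\lambda_\infty)=0$.

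The second ingredient is that $\phi(\lambda_n)\ge0$ for every $n$: since $X_n\le X$ and $\lambda_n>0$, monotonicity of $\log$ gives $0=E\log(1+\lambda_nX_n)\le E\log(1+\lambda_n X)=\phi(\lambda_n)$. The third, and delicate, ingredient is left-continuity of $\phi$ at the endpoint, $\lim_{\lambda\uparrow1}\phi(\lambda)=\phi(1)$ in $[-\infty,\infty)$. I emphasize that concavity alone does \emph{not} supply this, since a concave function may jump downward at a right endpoint, which is the wrong direction; so I would argue directly on the integral. Splitting according to $\{X\ge0\}$ and $\{X<0\}$, on the first set $\log(1+\lambda X)\uparrow\log(1+X)$ and monotone convergence applies (the limit being bounded by $EX^+<\infty$), while on the second set $\log(1+\lambda X)\downarrow\log(1+X)\le0$ with an integrable minorant at any fixed $\lambda_0<1$, so monotone convergence applies again, possibly to $-\infty$. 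Adding the two halves yields the claimed left-continuity.

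For the first claim, assume $\phi(1)=E\log(1+X)<0$. If $\lambda_\infty$ were equal to $1$, then left-continuity would give $\phi(1)=\lim_n\phi(\lambda_n)\ge0$, contradicting $\phi(1)<0$; hence $\lambda_\infty<1$. The convergence statement of the first paragraph then yields $\phi(\lambda_\infty)=0$, and since $\lambda_\infty\ge\lambda_1>0$ this is a positive zero of $\phi$. Uniqueness is immediate from the properties of $\phi$ recorded before the theorem: $\phi(0)=0$, positive slope $EX>0$ at the origin, and strict concavity force exactly one zero in $(0,1)$, which must therefore be $\lambda_\infty$; thus $\lambda_\infty$ is the unique positive solution of the defining equation \rref{EqnDefRiskiness}.

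For the second claim, assume $\phi(1)\ge0$ and suppose, for contradiction, that $\lambda_\infty<1$. Then the convergence statement again gives $\phi(\lambda_\infty)=0$. But writing $\lambda_\infty=\lambda_\infty\cdot1+(1-\lambda_\infty)\cdot0$ as a strict convex combination and invoking strict concavity together with $\phi(0)=0$ and $\phi(1)\ge0$ gives $\phi(\lambda_\infty)>\lambda_\infty\,\phi(1)\ge0$, contradicting $\phi(\lambda_\infty)=0$. Hence $\lambda_\infty=1=L(X)$. I expect the only genuine obstacle to be the endpoint analysis at $\lambda=1$: making the limit interchange rigorous on the negative part of $\log(1+\lambda_nX_n)$, which is unbounded below near $X=-1$, and, relatedly, establishing \emph{genuine} left-continuity of $\phi$ at $1$ rather than the merely one-sided estimate that concavity would give. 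Everything else reduces to bookkeeping with monotone and dominated convergence.
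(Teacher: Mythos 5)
Your proof is correct, and it is a hybrid: half of it coincides with the paper's argument, the other half takes a genuinely different route. For the case $\lambda_\infty<1$ you argue exactly as the paper does: dominate $\log(1+\lambda_n X_n)$ below by the constant $\log(1-\lambda_\infty)$ and above by an integrable majorant (you use $X^+$, the paper uses $\log(1+|X|)\le |X|$), apply dominated convergence to obtain $\phi(\lambda_\infty)=0$, and then use strict concavity; your claim-2 contradiction $\phi(\lambda_\infty)>\lambda_\infty\phi(1)\ge 0$ is just the contrapositive form of the paper's remark that zeros of the strictly concave $\phi$ at $0$ and at $\lambda_\infty<1$ force $\phi(1)<0$. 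The genuine divergence is the endpoint case $\lambda_\infty=1$: the paper disposes of it in one stroke by applying Fatou's lemma to $Z_n'=-\log(1+\lambda_n X_n)\ge -\log(1+|X|)\ge -|X|\in L^1$, concluding $E\log(1+X)\ge 0$ directly from $EZ_n'=0$. You instead decouple the two limits: first the comparison $0=E\log(1+\lambda_n X_n)\le\phi(\lambda_n)$ (from $X_n\le X$ and monotonicity of $\log$), and then left-continuity of $\phi$ at $1$ with values in $[-\infty,\infty)$, proved by splitting on $\{X\ge 0\}$ and $\{X<0\}$ and applying monotone convergence separately on each piece. Both routes are sound. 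The paper's Fatou argument is shorter and works entirely with the approximating sequence; your route is more hands-on but isolates a sharper structural fact of independent interest, namely that $\phi$ is genuinely continuous up to the boundary point $\lambda^*(X)=1$, which is precisely why the dichotomy in the theorem is governed by the sign of $\phi(1)$. Your explicit warning that concavity alone cannot deliver this left-continuity (a concave function may jump \emph{downward} at an endpoint of its domain, which is the useless direction here) is correct, and it identifies exactly the trap that the paper's Fatou argument silently sidesteps.
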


\begin{proof}
It is easier to prove the converse of the two statements.
Let us start with assuming $\lambda_\infty<1$. In that case, the sequence
$$Z_n = \log\left(1+\lambda_n X_n\right)$$ is uniformly bounded. Indeed,
$$ -\infty < \log(1-\lambda_\infty) \le Z_n \le \log(1+|X|) \le |X|\in L^1 \,. $$
As we have $Z_n \to \log(1+\lambda_\infty X)$ a.s., we can then invoke Lebesgue's dominated convergence theorem to conclude
$$ 0 = \lim E Z_n = E \lim Z_n = E \log(1+\lambda_\infty X)\,.$$
In particular, the equation \rref{EqnDefRiskiness} has a positive solution $\lambda_\infty<1$. As $\phi$ is strictly concave and strictly positive on $(0,\lambda_\infty)$, we conclude that we must have $\phi(1)=E \log(1+X)<0$. This proves the second claim.

Now let us assume $\lambda_\infty=1$. In that case, we cannot use Lebesgue's theorem. However, the sequence
$$Z_n^\prime = - \log\left(1+\lambda_n X_n\right)$$ is bounded from below by $- \log(1+|X|) \geq -|X| \in L^1$. We can then apply Fatou's lemma to conclude
$$- E \log(1+X)=E \lim  Z_n^\prime \le  \liminf - E \log\left(1+\lambda_n X_n\right)=0\,,$$
or
$$ E \log(1+X) \ge 0\,.$$ This proves the first claim.
\end{proof}

After stating our main theorem, we are eventually able to define our extended Foster--Hart measure of riskiness.
\begin{definition}
\label{ExtRiskiness}
Let $X$ be a gamble with maximal loss $L$.
If $E \log(1+X/L)<0$, we define the extended Foster--Hart measure of riskiness $\rho(X)$ as the unique positive solution of the equation
$$  E \log\left(1+\frac{ X}{\rho(X)}\right) = 0\,.$$
If $E \log(1+X/L) \ge 0$, we define $\rho(X)$ as the maximal loss of $X$,
$$\rho(X)= L\,.$$
\end{definition}


\section{Conditional Riskiness and No Bankruptcy}

\subsection{Learning and Conditional Riskiness}

So far we discussed the riskiness in a static framework as it is done in \cite{FoHa09}. However, in many situations it is more useful to measure risk in a dynamic way. If there is, for instance, more information about an asset available, this should be visible in the assigned risk of this position. A dynamic framework allows to take such updating information into account. The purpose of this section is to embed our extended Foster--Hart riskiness into a dynamic framework.

In the following, let $(\Omega,\cF,(\cF_t)_{t\in \mathbb{N}},P)$ be a filtered probability space, where the filtration $(\cF_t)_{t\in \mathbb{N}}$ represents the information structure given at the respective time $t$. We denote by $\cA_t$ the set of all $\cF_t-$measurable random variables and consider a sequence of random variables $(X_t)$ that is adapted to the filtration $(\cF_t)_{t\in \mathbb{N}}$. In order to be able to measure the risk of $X_t$ in every time period $s<t$, $X_t$ has to satisfy all the conditions which define a gamble given the respective information structure $(\cF_s)$.
\begin{definition}
  We call a random variable $X$ on $(\Omega,\cF,P)$ a gamble for the $\sigma$--field $\cF_s\subset \cF$ if $X$ is bounded and satisfies  $E[X|\cF_s]>0$ $a.s.$ and $P(X_t<0|\cF_s)>0$ $a.s.$.
\end{definition}

In the following, we assume that for $t>s$, $X_t$ is a gamble for  $\cF_s$.

As time goes by, we learn something about the realization of the random variable and are therefore able to quantify the risk more precisely. Measuring the risk of $X_t$ in every single time period $s<t$ yields a family of conditional risk measures $(\rho_s(X_t))_{s=1...t-1}$, where every $\rho_s(X_t)$ is a $\cF_{s}-$measurable random variable that is either the solution of the equation
\begin{equation}\label{EqnDefDynRiskiness}
  E \left[ \log\left(1+\frac{X_{t}}{\rho_s(X_t)}\right) | \cF_{s} \right] = 0  \,
\end{equation}
or $\rho_s(X_t)=L_s(X_t)$, where we define
\[
L_s(X_{t}):=\essinf\{Z\in \cA_s | P(-X_{t}>Z | \cF_{s})=0 \hspace{2mm} a.s.\}.
\]
This means, $L_s(X_{t})$ is the maximal loss of $X_t$ given the information at time $s$.
As we have done for our extended measure of riskiness in the static case, we set $\rho_s(X_t)=L_s(X_t)$ on the set $$B:=\left\{E \left[ \log\left(1+\frac{X_{t}}{L_s(X_t)}\right) |  \cF_s \right] \ge 0 \right\}$$ and define $\rho_s(X_t)$ as the solution of equation \rref{EqnDefDynRiskiness} on $B^c$.
We call the so defined family of conditional risk measures $(\rho_s(X_t))_{s=1...t-1}$ the \textit{dynamic extended Foster--Hart riskiness} of $X_t$.

The next Lemma shows that the dynamic extended Foster--Hart riskiness is indeed well defined.
\begin{lemma} \label{welldef}
There exists one and only one $\cF_s$--measurable random variable $\rho_s(X_t) \ge L_s(X_t)$ that solves equation \rref{EqnDefDynRiskiness} on $B^c$ and satisfies $\rho_s(X_t)=L_s(X_t)$ on $B$.
\end{lemma}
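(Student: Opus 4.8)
The plan is to transfer the static existence-and-uniqueness argument proved above to the conditional setting, working pointwise in $\omega$ and treating measurability separately. The natural change of variable is $\lambda = 1/\rho_s(X_t)$: writing
$$\psi(\omega,\lambda) = E\bigl[\log(1+\lambda X_t)\mid \cF_s\bigr](\omega),$$
equation~\rref{EqnDefDynRiskiness} becomes $\psi(\omega,\lambda)=0$ with $\lambda \in (0,\lambda^*(\omega))$, where $\lambda^*(\omega):=1/L_s(X_t)(\omega)$. Since $X_t$ is bounded and real-valued, I would first fix a regular conditional distribution $\mu_\omega(\cdot)=P(X_t\in\cdot\mid\cF_s)(\omega)$, so that $\psi(\omega,\lambda)=\int\log(1+\lambda x)\,\mu_\omega(dx)$ becomes a genuine pointwise object, jointly measurable in $(\omega,\lambda)$, with $L_s(X_t)(\omega)$ equal to the conditional essential supremum of $-X_t$ under $\mu_\omega$ for a.e.\ $\omega$. (Alternatively one can avoid regular conditional probabilities by fixing, for each rational $q>0$, a single version of $E[\log(1+qX_t)\mid\cF_s]$ on the $\cF_s$-set $\{qL_s(X_t)<1\}$ and arguing along that countable grid.)

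Next I would record the conditional analogues of the three facts used in the static proof, all holding simultaneously for a.e.\ $\omega$: (i) $\lambda\mapsto\psi(\omega,\lambda)$ is continuous and \emph{strictly} concave on $[0,\lambda^*(\omega))$; (ii) $\psi(\omega,0)=0$ and $\psi$ is strictly positive on a right neighbourhood of $0$, since $\log(1+\lambda X_t)/\lambda\to X_t$ is dominated by $|X_t|$, whence $\psi(\omega,\lambda)/\lambda\to E[X_t\mid\cF_s](\omega)>0$; and (iii) on $B^c$ one has $\psi(\omega,\lambda^*(\omega))=E[\log(1+X_t/L_s(X_t))\mid\cF_s](\omega)<0$ by the very definition of $B$. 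Strict concavity is the conditional Jensen inequality applied to the pointwise strict concavity of $\lambda\mapsto\log(1+\lambda x)$ on $\{X_t\neq 0\}$, a set of positive conditional probability because $P(X_t<0\mid\cF_s)>0$; to upgrade from fixed pairs of arguments to all $\lambda$ simultaneously I would restrict to a countable dense set and use continuity.

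Granting (i)--(iii), pointwise existence and uniqueness is just the intermediate value theorem together with strict concavity: for a.e.\ $\omega\in B^c$ the map $\psi(\omega,\cdot)$ rises from $0$ with positive slope and is strictly negative at $\lambda^*(\omega)$, so it has exactly one zero $\lambda_\infty(\omega)\in(0,\lambda^*(\omega))$. I then set $\rho_s(X_t)(\omega)=1/\lambda_\infty(\omega)$ on $B^c$, which forces $\rho_s(X_t)(\omega)>L_s(X_t)(\omega)$, and $\rho_s(X_t)=L_s(X_t)$ on $B$. For measurability I would avoid differentiating the implicit equation and instead use the explicit description
$$\lambda_\infty(\omega)=\sup\{q\in\mathbb{Q}:\ 0<q,\ qL_s(X_t)(\omega)<1,\ \psi(\omega,q)\ge 0\},$$
a countable supremum of $\cF_s$-measurable expressions, since each $\psi(\cdot,q)$ is $\cF_s$-measurable on the $\cF_s$-set $\{qL_s(X_t)<1\}$, while $B$ and $L_s(X_t)$ are $\cF_s$-measurable (note $L_s(X_t)>0$ a.s.). Hence $\rho_s(X_t)=1/\lambda_\infty$ is $\cF_s$-measurable. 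That it genuinely solves \rref{EqnDefDynRiskiness} on $B^c$ follows by evaluating $\psi$ at $\lambda_\infty<\lambda^*$, where $\log(1-\lambda_\infty L_s(X_t))\le\log(1+\lambda_\infty X_t)$ is bounded below, and bounded above because $X_t$ is bounded, so dominated convergence for conditional expectations lets me pass to the limit along rationals approaching $\lambda_\infty$.

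The hard part is not the concavity/IVT step, which is routine, but the measure-theoretic bookkeeping: producing a single jointly measurable version of $\psi$ (equivalently, a regular conditional distribution) for which (i)--(iii) hold simultaneously off one null set, identifying $L_s(X_t)$ with the conditional essential supremum so that $\lambda^*(\omega)$ is the correct pointwise cutoff, and extracting the solution as a genuinely $\cF_s$-measurable random variable rather than a mere pointwise recipe. Uniqueness is then immediate: on $B$ the value $L_s(X_t)$ is forced, and on $B^c$ the pointwise uniqueness of $\lambda_\infty(\omega)$ forces $\rho_s(X_t)$ almost surely, so the $\cF_s$-measurable $\rho_s(X_t)\ge L_s(X_t)$ with the two required properties is unique up to a null set.
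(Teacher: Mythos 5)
Your proposal is correct in substance, but it takes a genuinely different route from the paper. You reduce everything to a family of deterministic one--dimensional problems: fixing a regular conditional distribution $\mu_\omega$ turns $\psi(\omega,\cdot)$ into an ordinary concave function for each $\omega$, the static IVT/strict--concavity picture gives the zero $\lambda_\infty(\omega)$ pointwise, and measurability is recovered through the countable supremum over rationals. The paper, by contrast, never leaves the level of $\cF_s$--measurable random variables: it forms the acceptance set $\Lambda$ of all $\cF_s$--measurable $\lambda\in(0,1]$ with $E[\log(1+\lambda X_t)\mid\cF_s]\ge 0$, shows $\Lambda\neq\emptyset$ via the Taylor estimate $\log(1+x)\ge x-2x^2$ (applied to large conditional wealth levels), shows $\Lambda$ is upwards directed so that its essential supremum $\lambda_0$ is an increasing limit of elements of $\Lambda$, obtains $E[\log(1+\lambda_0 X_t)\mid\cF_s]\ge 0$ from Fatou, and finally proves equality on $\{\lambda_0<1\}$ by a perturbation argument ($\lambda_1=(1-\epsilon)\lambda_0+\epsilon$ together with a mean--value--theorem estimate), which would otherwise contradict the essential--supremum property. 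Your route buys transparency: existence and, notably, uniqueness are immediate from the pointwise picture, whereas the paper's perturbation lemma is its substitute for the intermediate value theorem and its uniqueness claim is essentially left implicit. The cost is exactly the bookkeeping you identify: joint measurability of $\psi$, the a.s.\ identification of $L_s(X_t)(\omega)$ with the essential supremum of $-X_t$ under $\mu_\omega$, and the substitution (freezing) lemma needed to pass between the pointwise identity $\psi(\omega,\lambda_\infty(\omega))=0$ and equation \rref{EqnDefDynRiskiness} --- note that this last translation is needed in both directions, since your uniqueness argument also has to convert any $\cF_s$--measurable solution of \rref{EqnDefDynRiskiness} into a pointwise zero of $\psi(\omega,\cdot)$. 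The paper's construction gets $\cF_s$--measurability for free and avoids the freezing step entirely.

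One step you should make explicit: your item (i) asserts continuity of $\psi(\omega,\cdot)$ only on the half--open interval $[0,\lambda^*(\omega))$, while item (iii) gives negativity at the endpoint $\lambda^*(\omega)$; as stated, the intermediate value theorem does not yet apply, because a concave function may jump down at the right endpoint of its domain (e.g.\ $\phi\equiv 0$ on $[0,1)$, $\phi(1)=-1$ is concave). What saves you is left--continuity of $\psi(\omega,\cdot)$ at $\lambda^*(\omega)$: as $\lambda\uparrow\lambda^*(\omega)$, the integrand decreases on $\{x<0\}$ (monotone convergence, limit possibly $-\infty$) and is nonnegative and bounded above by $\log\bigl(1+\lambda^*(\omega)\,\sup X_t\bigr)$ on $\{x\ge 0\}$ (dominated convergence, using that $X_t$ is bounded), so $\psi(\omega,\lambda)\to\psi(\omega,\lambda^*(\omega))$ in $[-\infty,\infty)$. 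With that one line added, your pointwise existence claim, and hence the whole argument, is sound.
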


The proof is given in the appendix.

\begin{remark}
An important question that arises in a dynamic framework is how the conditional risks at different times are interrelated. This question leads to the important notion of time--consistency. A dynamic risk measure ($\rho _{s})_{s=1...t-1}$ is called time--consistent
if for any gamble $X_t^1,X_t^2$ and for all $t=1,...,t-2$ it holds that
$$\rho_{s+1}(X_t^1)\geq \rho _{s+1}(X_t^2) \hspace{2mm} a.s. \Longrightarrow \rho _{s}(X_t^1)\geq \rho_{s}(X_t^2) \hspace{2mm} a.s.$$
That means, in particular, that if we know that tomorrow in every state of the world gamble $X_t^1$ is assigned to have a higher risk than gamble $X_t^2$, this should also hold true today.

This desirable property is not satisfied by the dynamic extended Foster--Hart riskiness. A detailed example which illustrates that is given in the appendix.
\end{remark}

\subsection{No Bankruptcy}

We conclude our investigation by checking  that our extended definition of riskiness avoids bankruptcy.

\begin{theorem}
\label{NoBankruptcy}
  Let $(X_n)$ be a sequence of gambles that are uniformly bounded above by some integrable random variable $Y>0$ and satisfy some minimal possible loss requirement, i.e. there exists $\epsilon>0$ such that $a.s.$ $$L_{n-1} (X_n) \ge \epsilon>0$$ for all $n$.
  Let $W_0>0$ be the initial wealth and define recursively
  $$W_{t+1}=W_t + X_{t+1}$$ if
  $E \left[ \log\left(1+\frac{X_{t+1}}{W_t}\right) | \cF_t \right] \geq 0$ and
  $$W_{t+1}=W_t$$ else.
  We then insure no bankruptcy, i.e.
  $$P [\lim W_t=0 ] = 0\,.$$
\end{theorem}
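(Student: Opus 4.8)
The plan is to show that the log--wealth process $(\log W_t)_{t\ge 0}$ is a submartingale with respect to $(\cF_t)$ and then to rule out $\log W_t\to-\infty$ on a set of positive probability by a localized convergence argument. First I would record the structural content of the acceptance rule. Writing $\phi_t(\lambda)=E[\log(1+\lambda X_{t+1})\mid\cF_t]$, this map is a.s.\ concave with $\phi_t(0)=0$ and slope $\phi_t'(0)=E[X_{t+1}\mid\cF_t]>0$, so by the definition of the conditional riskiness (equation \rref{EqnDefDynRiskiness} and Lemma~\ref{welldef}) one has $\phi_t(1/w)\ge 0$ exactly when $w\ge\rho_s(X_{t+1})$. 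Hence the acceptance criterion $E[\log(1+X_{t+1}/W_t)\mid\cF_t]\ge 0$ is equivalent to $W_t\ge\rho_s(X_{t+1})\ge L_s(X_{t+1})\ge\epsilon$; in particular a gamble is accepted only when current wealth is at least $\epsilon$. Moreover, when a gamble is accepted the conditional expectation of $\log(1+X_{t+1}/W_t)$ is finite, which forces $1+X_{t+1}/W_t>0$ a.s., i.e.\ $W_{t+1}>0$ a.s., so that $\log W_t$ is a.s.\ well defined and finite for every $t$.

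Next I would verify that $(\log W_t)$ is a submartingale. On the rejection set $W_{t+1}=W_t$, so the increment vanishes; on the acceptance set $\log W_{t+1}-\log W_t=\log(1+X_{t+1}/W_t)$ has nonnegative $\cF_t$--conditional mean by the acceptance rule. The uniform bound $X_{t+1}\le Y$ gives $\log^+ W_{t+1}\le\log^+(W_t+Y)$, which, together with integrability of $Y$, keeps the positive parts controlled and makes the conditional expectations legitimate.

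The delicate point is that $\log W_t$ need not be integrable: its negative part may blow up as wealth approaches $0$, and $Y$ is only integrable rather than bounded, so the global submartingale convergence theorem does not apply directly. I would therefore localize. For $K\in\mathbb N$ set $\tau_K=\inf\{t:W_t\ge K\}$; the stopped process $\log W_{t\wedge\tau_K}$ is again a submartingale, and since $W_{t\wedge\tau_K}\le K+Y$ its positive part satisfies $\sup_t E[(\log W_{t\wedge\tau_K})^+]<\infty$. Doob's submartingale convergence theorem then yields that $\log W_{t\wedge\tau_K}$ converges a.s.\ to a finite limit.

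Finally I would combine these facts on the bankruptcy event $A=\{\lim_t W_t=0\}$. Since $(W_t)$ converges on $A$ it is bounded there, so $A\subseteq\bigcup_{K\in\mathbb N}\{\sup_t W_t<K\}$. On $A\cap\{\sup_t W_t<K\}$ we have $\tau_K=\infty$, hence $\log W_{t\wedge\tau_K}=\log W_t$ converges to a finite limit; but on $A$ we have $\log W_t\to-\infty$, a contradiction. Therefore $P(A\cap\{\sup_t W_t<K\})=0$ for every $K$, and letting $K\to\infty$ gives $P(A)=0$. The main obstacle is precisely the integrability gap that forces the localization; here the minimal--loss hypothesis $L_{n-1}(X_n)\ge\epsilon$ offers a clean complementary route and disposes of the remaining boundary case. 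Indeed, wealth is never gambled when $W_t<\epsilon$, so once $W_t$ falls below $\epsilon$ it is frozen; thus on $A$ only finitely many gambles can be accepted (otherwise $W_t\ge\epsilon$ infinitely often, contradicting $W_t\to 0$), and reaching $W=0$ would require a single accepted gamble attaining its maximal loss, an event to which the acceptance rule assigns conditional probability zero, so it contributes nothing to $P(A)$.
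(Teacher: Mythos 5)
Your proposal contains two arguments, and the roles are reversed from how you present them: the headline localization--Doob argument has a gap as written, while the closing remark you offer as a ``complementary route'' is in fact a complete proof on its own. The gap: Doob's submartingale convergence theorem presupposes a genuine submartingale, i.e.\ integrable terms. Stopping at $\tau_K=\inf\{t: W_t\ge K\}$ controls only $(\log W_{t\wedge\tau_K})^+$; it does nothing for the negative part, and since you explicitly concede that $\log W_t$ may fail to be integrable, your assertion that the stopped process ``is again a submartingale'' is unsupported. The fix uses exactly the theorem's hypotheses: acceptance at time $t$ forces $W_t\ge \rho_t(X_{t+1})\ge L_t(X_{t+1})\ge\epsilon$, so each accepted increment $D_t=\log\left(1+\frac{X_{t+1}}{W_t}\right)$ satisfies $D_t^+\le \log\left(1+\frac{Y}{\epsilon}\right)\le \frac{Y}{\epsilon}\in L^1$; since moreover $E[D_t\mid\cF_t]\ge 0$ on the acceptance set, one gets $E[D_t^-]\le E[D_t^+]\le E[Y]/\epsilon<\infty$, hence every $D_t$ and every $\log W_t$ is integrable, and your localization then runs. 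The paper sidesteps the integrability issue differently: it cites Shiryaev's convergence theorem for submartingales of class ${\cal C}^+$, verifying the class condition with the very same bound $\log\left(1+\frac{Y}{\epsilon}\right)\le \frac{Y}{\epsilon}$, and concludes directly that $\{\log W_t\to-\infty\}$ is a null set.

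More interestingly, your last paragraph is by itself a complete, elementary proof of the stated theorem, genuinely different from the paper's and needing no martingale theory at all. Since acceptance requires $W_t\ge L_t(X_{t+1})\ge\epsilon$, on $\{W_t<\epsilon\}$ the gamble is rejected and wealth is frozen at its current value forever; that value is strictly positive a.s.\ by your first paragraph (which reproves the paper's induction step that $W_t>0$ a.s.). Hence on $A=\{\lim W_t=0\}$ only finitely many gambles are accepted, so $W_t$ is eventually constant and $\lim W_t$ equals that constant, which is a.s.\ strictly positive; therefore $P(A)=0$. This argument exploits the hypothesis $L_{t}(X_{t+1})\ge\epsilon$ in full and delivers exactly the stated conclusion, with the martingale machinery rendered unnecessary. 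What the paper's submartingale/class-${\cal C}^+$ route buys in exchange for the heavier apparatus is a strictly stronger conclusion: almost surely on $\{\sup_t W_t<\infty\}$ the wealth converges to a finite, strictly positive limit, rather than merely failing to converge to zero.
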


\begin{proof}
  Note first that $W_t >0$ $a.s.$. This can be shown by induction. We have $W_0>0$. We have either  $W_{t+1} = W_{t}$ which is positive by induction hypothesis,  or $W_{t+1} = W_{t}+X_{t+1}$. In this case, the condition $E \left[ \log\left(1+\frac{X_{t+1}}{W_t}\right) | \cF_{t} \right] \geq 0$ implies that a.s.
  \[
   W_t \geq \rho_{t}(X_{t+1}) \geq L_t(X_{t+1}).
  \]

   Thus, $W_t-L_t(X_{t+1})\geq 0$ $a.s.$. If $\rho_{t}(X_{t+1}) = L_t(X_{t+1})$, then we have $P(X_{t+1}=L_t(X_{t+1}) |\cF_t)=0$. Hence, $a.s.$ it holds that
  \[
   W_{t+1}>W_t-L_t(X_{t+1})\geq 0.
  \]
  We can thus define
  $S_t = \log W_t$. We claim that $S$ is a submartingale. Indeed, on the set $$A:=\left\{E \left[ \log\left(1+\frac{X_{t+1}}{W_t}\right) |  \cF_t \right] < 0 \right\}$$
  which belongs to $\cF_t$, there is nothing to show. On the set $A^c$, we have
  \begin{align*}
    E \left[ S_{t+1} | \cF_t \right] &=  E \left[ \log W_{t+1} | \cF_t \right] \\
    &=  \log W_t + E \left[ \log \frac{W_{t+1}}{W_t}  | \cF_t \right] \\
    &= \log W_t + E \left[ \log \left(1+\frac{X_{t+1}}{W_t}\right)  | \cF_t \right] \\
    &\geq \log W_t = S_t \,.
  \end{align*}

  $S$ is thus a submartingale. We apply a theorem on submartingale convergence to complete our proof. We will use Theorem 1 in \cite{Sh84}, Chapter VII.
  For $a>0$, let $\tau_a=\inf\left\{t \ge o : X_t >a \right\}$. A stochastic sequence belongs to class ${\cal C}^+$ if for every $a>0$ we have
  $$E \left(X_{\tau_a}-X_{\tau_a-1}\right)^+ 1_{\left\{\tau_a < \infty \right\}} < \infty.$$

  Let us check that our sequence $S$ is of class ${\cal C}^+$. Indeed, we have
  $$ \left(S_{\tau_a}-S_{\tau_a-1}\right)^+ = \log\left(1+\frac{X_{\tau_a}}{W_{\tau_a-1}}\right) 1_{\left\{X_{\tau_a} \ge 0\right\}} $$
  and in that case $W_{\tau_a-1} \ge \rho_{\tau_a-1}(X_{\tau_a}) \geq \epsilon >0$ $a.s$, so we conclude
  $$ E\left(S_{\tau_a}-S_{\tau_a-1}\right)^+ \le  E\log\left(1+\frac{Y}{\rho_{\tau_a-1}(X_{\tau_a})}\right)\le
  E\log\left(1+\frac{Y}{\epsilon}\right) \le E \frac{Y}{\epsilon} < \infty$$
  where $Y$ is the uniform integrable upper bound for our gambles and $\epsilon$ is the minimal possible loss lower bound.
  By Theorem 1 in \cite{Sh84}, Chapter VII, we conclude that the set $\{S_t \to -\infty\}$ is a null set. (Indeed, on the set $\{S_t \to -\infty\}$, $S$ is bounded above. The theorem then states that the limit of $S$ exists and is finite (almost surely), and thus cannot be negative infinity.)
\end{proof}

\section{A List of Examples}

In order to gain further inside into the behavior of our generalized measure of riskiness, we consider in this section different continuous distributed gambles and compute their riskiness.

\subsection{Uniform Distribution}

Firstly, we consider as in Example \ref{MotivatingEx} an uniform distributed random variable over the interval $[-100,b]$. In order to get a positive expectation, we must have $b>100$. In Figure \ref{Uniform} the graph of the riskiness $\rho$ (Figure \ref{uniform1}) as well as the solution $\lambda$ of equation \rref{EqnDefRiskiness} (Figure \ref{uniform2}), which is the reciprocal of the riskiness, is plotted against the maximal gain $b$ of the gambles. Example \ref{MotivatingEx} already shows that not for all values $b$ there exists a positive solution of the defining equation \rref{EqnDefRiskiness}. Indeed, the critical value for which $E \log\left(1+\lambda^* X \right) = 0\,$ is $b_c\approx171.83$. For values of $b$ which are greater than $b_c$ there is no positive solution of equation \rref{EqnDefRiskiness} and we take in this case $\rho=L(X)=100$ as the riskiness.

\begin{figure}
  \centering
  \subfigure[Generalized Foster--Hart riskiness $\rho$]{
    \label{uniform1}
    \includegraphics[width=0.46\textwidth]{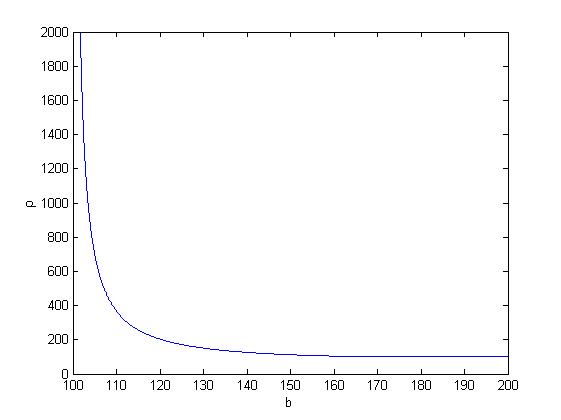}
  }
  \hfill
  \subfigure[$\lambda=\frac{1}{\rho}$]{
  \label{uniform2}
  \includegraphics[width=0.46\textwidth]{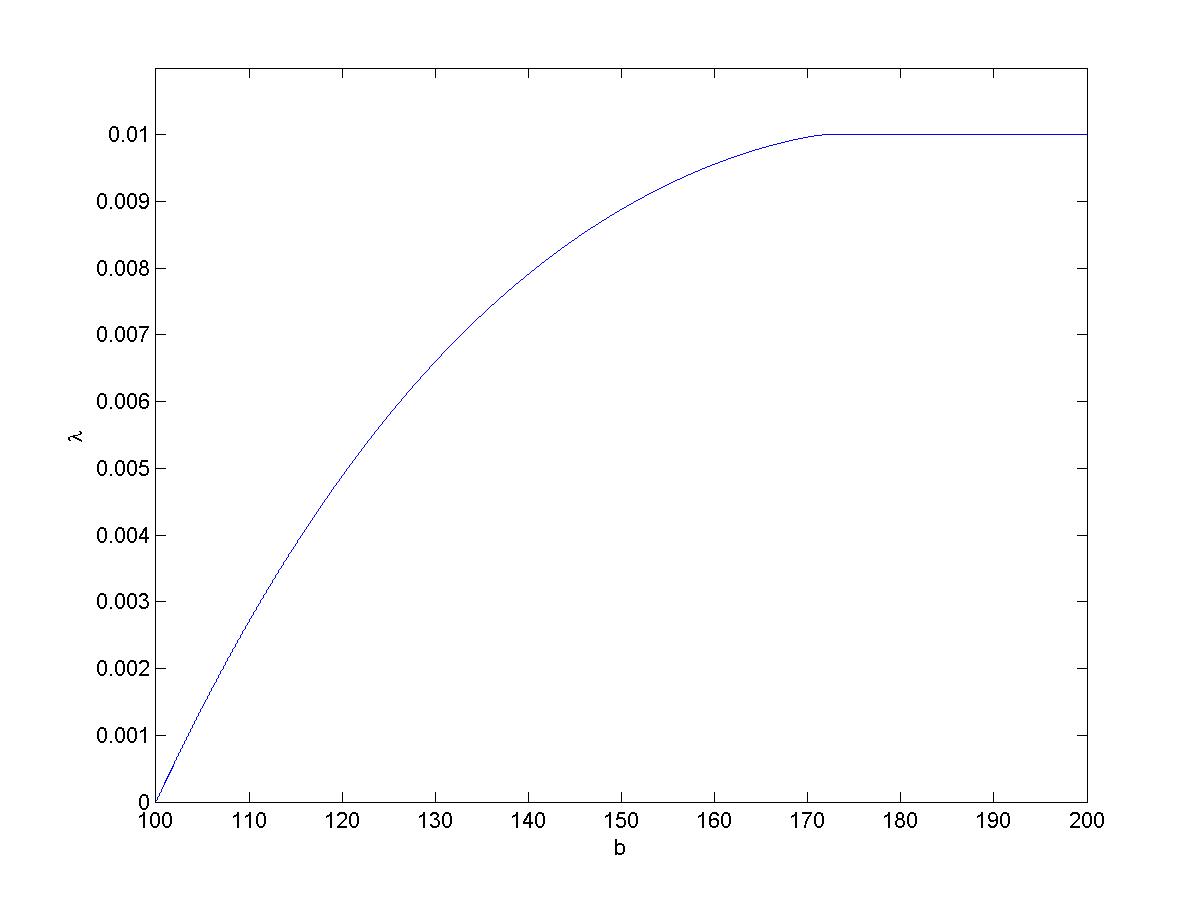}
  }
  \caption{$\rho$ and $\lambda$ for uniform distributed gambles over $[-100,b]$.}
  \label{Uniform}
\end{figure}

As a result the graph of the riskiness yields a continuous function; the riskiness tends to the maximal loss $L=100$ as we approach the critical value $b_c$ and converges to infinity as the expectation of the gamble goes to $0$ (i.e. $b\downarrow 100$).

\subsection{Beta Distribution}

\begin{figure}
  \centering
  \subfigure[Generalized Foster--Hart riskiness $\rho$]{
    \label{beta1}
    \includegraphics[width=0.46\textwidth]{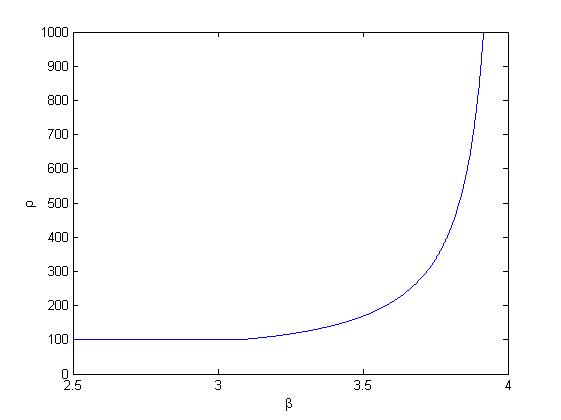}
  }
  \hfill
  \subfigure[$\lambda=\frac{1}{\rho}$]{
  \label{beta2}
  \includegraphics[width=0.46\textwidth]{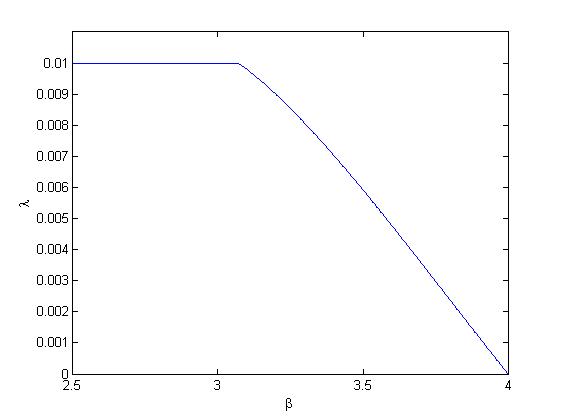}
  }
  \caption{$\rho$ and $\lambda$ for beta distributed gambles over $[-100,200]$ with $\alpha=2$.}
  \label{Beta}
\end{figure}

Next, we consider beta distributed gambles. The density of random variable $X$ that is beta distributed over the compact interval $[-a,b]$ is, for instance, given in \cite{JoKo95} as
\[
\varphi(x;\alpha,\beta,a,b)=\frac{1}{B(\alpha,\beta)}\frac{(x-a)^{\alpha-1}(b-x)^{\beta-1}}{(b-a)^{\alpha+\beta-1}}, x\in[-a,b], \alpha,\beta >0,
\]
where $B(\alpha,\beta)$ denotes the Betafunction defined as
\[
B(\alpha,\beta)=\int^{1}_{0}{t^{\alpha-1}(1-t)^{\beta-1}dt}.
\]
The mean of $X$ is given by
\[
E[X]=\frac{\alpha b+\beta a}{\alpha +\beta}.
\]
In our example we fix the interval as $[-100,200]$ and choose $\alpha=2$. We compute the riskiness as a function of the parameter $\beta$. Again, in order to be able to do the computation, the mean has to be greater than zero which implies $\beta<4$.
We let the value for $\beta$ run up to $4$, i.e. the expectation goes to zero.

Figure \ref{Beta} shows the graph of the riskiness and of $\lambda=\frac{1}{\rho}$ against the values of $\beta$. The defining equation \rref{EqnDefRiskiness} has a positive solution for all $\beta$ greater than $\beta_c\approx3.05$. For $\beta<\beta_c$, where no positive solution exists, the maximal loss $L=100$ is used to determine the riskiness. Again a continuous function is plotted that goes to infinity as the expectation tends to zero.

\subsection{Lognormal Distribution}

\begin{figure}
  \centering
  \subfigure[Generalized Foster--Hart riskiness $\rho$]{
    \label{lognormal2}
    \includegraphics[width=0.46\textwidth]{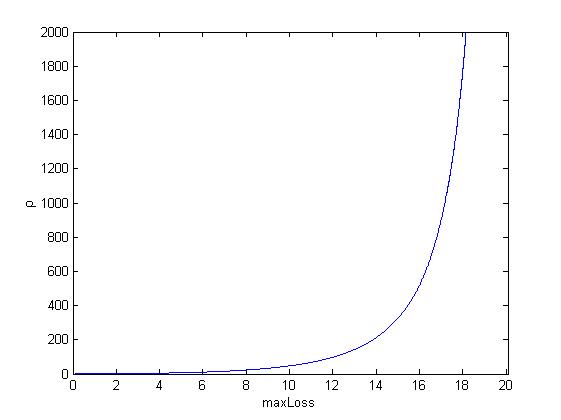}
  }
  \hfill
  \subfigure[$\lambda=\frac{1}{\rho}$]{
  \label{lognormal1}
  \includegraphics[width=0.46\textwidth]{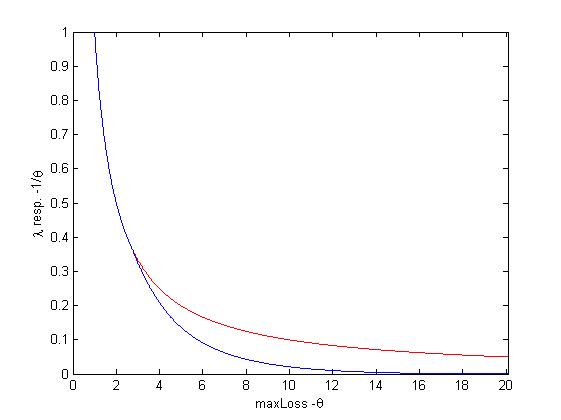}
  }
  \caption{$\rho$ and $\lambda$ for lognormally distributed gambles over $(\theta,\infty)$ with $\mu=1$ and $\sigma=2$.}
  \label{lognormal}
\end{figure}

The lognormal distribution is used in many financial applications, for instance in the widely used Black--Scholes options pricing model. Therefore, it seems to be important to be able to apply the measure of riskiness for this distribution.

A random variable $X$ is said to be lognormally distributed if its density $\varphi$ is
\[
\varphi(x;\mu, \sigma, \theta)=\frac{1}{(x-\theta)\sqrt{2\pi\sigma}}\exp(-\frac{1}{2}\frac{(log(x-\theta)-\mu)^2}{\sigma^2}), x>\theta, \footnote{See \cite{JoKo95}.}
\]
where $\mu$ and $\sigma$ are the expected value, respectively the standard deviation, of the normally distributed random variable $X^N=\log(X-\theta)$ and $-\theta$ is the maximal loss of $X$.

We fix $\mu=1$ and $\sigma=2$ and let $\theta$ run through the interval $[-20.08,0)$. (We have to take $\theta \geq -20.08$ in order to make sure that the expectation is positive). The critical value for which there exists no zero for the defining equation is $\theta\approx -2.72$. For any $\theta\geq -2.72$ we take therefore the maximal loss $-\theta$ as the riskiness.

We plot the graph of the riskiness in Figure \ref{lognormal}. Figure \ref{lognormal1} shows the graph of $\lambda=\frac{1}{\rho}$ as well as of $\frac{1}{-\theta}$, against the maximal loss $L(X)=-\theta$. The upper curve shows $\frac{1}{-\theta}$. If $-\theta< 2.72$, both lines coincides and for $-\theta \geq 2.72$, i.e. where a solution of the defining equation exists, we always have that $\lambda<L(X)$. As the expectation goes to zero, i.e. the maximal loss gets closer to $20.08$, the riskiness tends to infinity or in other words $\lambda$ converges to zero.

As in the examples before, the figure yields a continuous riskiness function. This illustrates that our generalized Foster--Hart measure of riskiness provides a continuous extension of \cite{FoHa09}, which allows to apply the measure for many more classes of random variables, without losing the operational interpretation of avoiding bankruptcy in the long run.

\section{Conclusion}

For many continuous distributed gambles the equation which defines the measure of riskiness proposed by \cite{FoHa09} does not have a finite solution. We extended their defintion by using the maximal loss of the gamble as its riskiness, whenever the defining equation does not have a solution. Our extension allows to apply the measure of riskiness also for continuous distributed gambles which have a strictly positive density on their support and which are bounded from below. We have justified our extension by showing both that it is indeed a consistent extension of the Foster--Hart measure of riskiness and that the no-bankruptcy result of \cite{FoHa09} carries over to the extended measure of riskiness.

Finally, as it is important for many financial applications to measure risk dynamically, i.e. taking updating information into account, we defined the extended riskiness also in a dynamic framework.

\newpage
\begin{appendix}
\section{Appendix}

  The appendix contains some technical details and proofs.

\subsection*{Approximation by Discrete Random Variables}

\begin{proof3}{\ref{LemmaIncreasingApprox}}

 Suppose the support of $X$ is $[-L,M]$. Obviously we have $x_k^n<x_{k+1}^{n}$.

 Further,
 \[
   x^n_k = -1 + \frac{k}{2^n} (M+1)= -1 + \frac{2k}{2^{n+1}} (M+1)=x^{n+1}_{2k}
 \]
  and therefore
 \[
   X_n=\sum_{k=0}^{2^n-1} x^n_k 1_{\left\{x^n_k \le X < x^n_{k+1}\right\}}=\sum_{k=0}^{2^n-1} x^{n+1}_{2k} 1_{\left\{x^{n+1}_{2k} \le X <    x^{n+1}_{2(k+1)}\right\}}.
 \]
  The last expression is smaller than $X_{n+1}$ since we have on each interval $[x^{n+1}_{2k},x^{n+1}_{2(k+1)})\neq \emptyset$ that
 \[
   X_n=x^{n+1}_{2k}
 \]
  and
 \[
   X_{n+1}\geq x^{n+1}_{2k}
 \]
 with $X_{n+1} > x^{n+1}_{2k}$ on $\emptyset \neq [x^{n+1}_{2k+1},x^{n+1}_{2(k+1)})\subset [x^{n+1}_{2k},x^{n+1}_{2(k+1)})$.
 Hence, $(X_n)$ is an increasing sequence.

 Analogously, we can prove the statement for the support $[-1,\infty)$, just by replacing $x_k^n$ by $-1+\frac{k}{2^n}$ and $X_n$ by $$\sum_{k=0}^{n2^n-1} x^n_k 1_{\left\{x^n_k \le X < x^n_{k+1}\right\}}+(-1+n) 1_{\{X\geq(-1+m)\}}.$$

 Finally, we have $\lim X_n = X$ $a.s.$ by construction of the sequence $(X_n)$.
\end{proof3}

\subsection*{Time--Consistency}

 Consider two discrete gambles $X_2^1$ and $X_2^2$ that have their payments in two periods $(t=2)$ from now. They are distributed according to the binomial trees given below. In $t=1$ two states of the world are possible which occur with equal probability $\frac{1}{2}$. We compute the riskiness today $(t=0)$ and in each state in $t=1$.

Gamble $X_2^1$ has the following structure: \\
\tikzstyle{level 1}=[level distance=5cm, sibling distance=2.5cm]
\tikzstyle{level 2}=[level distance=4.5cm, sibling distance=1.2cm]

\tikzstyle{bag} = [text width=4em, text centered,shape=circle,draw=blue!50,fill=blue!20]
\tikzstyle{end} = [circle, minimum width=3pt,fill, inner sep=0pt]

\begin{tikzpicture}[grow=right, sloped]
\node[bag] {$\rho(X_2^1) \approx219.426$}
    child {
        node[bag] {$\rho_1^2(X_2^1) =250$}
            child {
                node[end, label=right:
                    {$-200$}] {}
                edge from parent
                node[above] {}
                node[below]  {$\frac{1}{2}$}
            }
            child {
                node[end, label=right:
                    {$1000$}] {}
                edge from parent
                node[above] {$\frac{1}{2}$}
                node[below]  {}
            }
            edge from parent
            node[above] {}
            node[below]  {$\frac{1}{2}$}
    }
    child {
        node[bag] {$\rho_1^1(X_2^1) =120$}
        child {
                node[end, label=right:
                    {$-100$}] {}
                edge from parent
                node[above] {}
                node[below]  {$\frac{1}{2}$}
            }
            child {
                node[end, label=right:
                    {$600$}] {}
                edge from parent
                node[above] {$\frac{1}{2}$}
                node[below]  {}
            }
        edge from parent
            node[above] {$\frac{1}{2}$}
            node[below]  {}
    };
\end{tikzpicture}

\tikzstyle{level 1}=[level distance=5cm, sibling distance=2.5cm]
\tikzstyle{level 2}=[level distance=4.5cm, sibling distance=1.2cm]

\tikzstyle{bag} = [text width=4em, text centered,shape=circle,draw=blue!50,fill=blue!20]
\tikzstyle{end} = [circle, minimum width=3pt,fill, inner sep=0pt]

Hence, $X_2^1$ has the payoffs $\{600,-100,1000,-200\}$ occurring each with equal probability. The riskiness computed in $t=1$ in state one is $\rho_1^1(X_2^1)=120$ and in state two $\rho_1^2(X_2^1)=250$. The riskiness today is $\rho(X_2^1)\approx 219.426$.

The second gamble $X_2^2$ is distributed according to the following tree: \\

\begin{tikzpicture}[grow=right, sloped]
\node[bag] {$\rho(X_2^2) \approx 243.76$}
    child {
        node[bag] {$\rho_1^2(X_2^2) =250$}
            child {
                node[end, label=right:
                    {$-240$}] {}
                edge from parent
                node[above] {}
                node[below]  {$\frac{1}{2}$}
            }
            child {
                node[end, label=right:
                    {$6000$}] {}
                edge from parent
                node[above] {$\frac{1}{2}$}
                node[below]  {}
            }
            edge from parent
            node[above] {}
            node[below]  {$\frac{1}{2}$}
    }
    child {
        node[bag] {$\rho_1^1(X_2^2) = 120$}
        child {
                node[end, label=right:
                    {$-105$}] {}
                edge from parent
                node[above] {}
                node[below]  {$\frac{1}{2}$}
            }
            child {
                node[end, label=right:
                    {$840$}] {}
                edge from parent
                node[above] {$\frac{1}{2}$}
                node[below]  {}
            }
        edge from parent
            node[above] {$\frac{1}{2}$}
            node[below]  {}
    };
\end{tikzpicture}

Although the payoffs of $X_2^2$ differ from the payoffs of $X_2^1$, the riskiness numbers at time $t=1$ coincides. Today, however, the risk of $X_2^2$ is strictly higher than the risk of $X_2^1$. That contradicts the time--consistency condition. Therefore, the dynamic extended Foster--Hart riskiness is not time--consistent.

\subsection*{Existence of Conditional Riskiness}

\begin{proof3}{\ref{welldef}}
Let $X_t$ be a gamble for the $\sigma$--field $\cF_s$.
Without loss of generality, we can assume $L_s(X_t)=1$ almost surely (else replace $X_t$ by $X_t / L_s(X_t)$).

We write $\cG=\cF_s$ and $X=X_t$ in the following for shorter notation.
  We fix a regular version $P(\omega,dx)$ for the conditional probability distribution of $X$ given $\cG$ (which exists as $X$ takes values in a Polish space). Whenever we write conditional expectations or probabilities in the following, we have this regular version in mind.

  For our construction, we need that there are wealth levels $W$ for which we accept the gamble $X$ given $\cG$.
  We thus start with the following observation.

  \begin{lemma}
    There exist $\cG$--measurable random variables $W\ge 1$ such that
    $$ E [ \log(W+X)| \cG ] \ge \log W \,.$$ In particular, this holds true for all $W$ with
    $$ W \ge \frac{ 2 E[X^2  | \cG ]}{E[X|\cG]}$$ and
    $$ \left| \frac{X}{W} \right| \le \frac{1}{2} \,.$$
  \end{lemma}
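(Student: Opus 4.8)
The plan is to reduce the claimed estimate to the single inequality $E[\log(1+X/W)\mid\cG]\ge 0$. Since $W$ is $\cG$-measurable (hence conditionally constant) and $W\ge 1>0$, the factorization $\log(W+X)=\log W+\log(1+X/W)$ gives, after taking conditional expectations, $E[\log(W+X)\mid\cG]=\log W+E[\log(1+X/W)\mid\cG]$, so the displayed bound $E[\log(W+X)\mid\cG]\ge\log W$ is \emph{equivalent} to $E[\log(1+X/W)\mid\cG]\ge 0$. The whole lemma then rests on one elementary scalar inequality for the logarithm together with the fact that $W$ can be pulled out of $E[\cdot\mid\cG]$.

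First I would record the pointwise bound
$$ \log(1+u) \ge u - u^2 \qquad \text{for all } |u| \le \tfrac{1}{2}. $$
This follows by differentiating $f(u)=\log(1+u)-u+u^2$: one computes $f'(u)=\frac{u(1+2u)}{1+u}$, which is $\le 0$ on $[-\tfrac12,0]$ and $\ge 0$ on $[0,\tfrac12]$, so $f$ attains its minimum $f(0)=0$ on the interval. Applying this with $u=X/W$, which is legitimate precisely because the hypothesis $|X/W|\le\tfrac12$ holds almost surely, gives
$$ \log\left(1 + \frac{X}{W}\right) \ge \frac{X}{W} - \frac{X^2}{W^2} \quad \text{a.s.} $$
Because $X$ is bounded and $|X/W|\le\tfrac12$, every term here is bounded, so taking conditional expectations is unproblematic, and $\cG$-measurability of $W$ lets me factor it out:
$$ E\left[\log\left(1+\frac{X}{W}\right) \,\Big|\, \cG\right] \ge \frac{E[X|\cG]}{W} - \frac{E[X^2|\cG]}{W^2} = \frac{1}{W}\left(E[X|\cG] - \frac{E[X^2|\cG]}{W}\right). $$

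The right-hand side is nonnegative as soon as $W\ge E[X^2|\cG]/E[X|\cG]$, hence a fortiori under the stated hypothesis $W\ge 2\,E[X^2|\cG]/E[X|\cG]$, where the (harmless) factor of $2$ merely provides slack; this establishes the ``in particular'' clause. For the existence assertion it then remains to produce one $\cG$-measurable $W\ge 1$ meeting both conditions at once. Since $X$ is a gamble for $\cG$ it is bounded, say $|X|\le K$, with $E[X|\cG]>0$ a.s.\ and $E[X^2|\cG]\le K^2<\infty$; therefore
$$ W := \max\left\{1,\; 2K,\; \frac{2\,E[X^2|\cG]}{E[X|\cG]}\right\} $$
is $\cG$-measurable, at least $1$, finite almost surely, and satisfies both $|X/W|\le\tfrac12$ (from $W\ge 2K$) and the moment bound by construction. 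I expect no genuine obstacle: the only care needed is the conditional-integrability justification for passing the pointwise inequality under $E[\cdot\mid\cG]$ (handled by boundedness of $X$ and of $u\mapsto\log(1+u)$ on $[-\tfrac12,\tfrac12]$) and the treatment of $W$ as a conditional constant; the one substantive ingredient, the scalar bound $\log(1+u)\ge u-u^2$ on $[-\tfrac12,\tfrac12]$, is elementary.
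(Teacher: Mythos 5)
Your proof is correct and takes essentially the same route as the paper's: a quadratic Taylor-type lower bound for the logarithm ($\log(1+u)\ge u-u^2$ on $|u|\le\tfrac12$, where the paper uses the slightly weaker $u-2u^2$), applied to $u=X/W$, with the $\cG$-measurable $W$ pulled out of the conditional expectation and the moment condition $W\ge 2E[X^2|\cG]/E[X|\cG]$ yielding nonnegativity. The one noteworthy difference is your explicit witness $W=\max\{1,\,2K,\,2E[X^2|\cG]/E[X|\cG]\}$ built from a uniform bound $K$ on $|X|$, which is genuinely $\cG$-measurable, whereas the paper's suggested witness $2\max\{E[X^2|\cG]/E[X|\cG],\,2|X|\}$ involves $|X|$ itself and so is not $\cG$-measurable as written; your choice quietly repairs this slip.
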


  \begin{proof}
    We use the estimate
    \begin{equation}
      \label{IneqLog}
      \log (1+x) \ge x - 2 x^2   \end{equation}
 for $|x| \le 1/2$ (which one can obtain from a Taylor--expansion and the Lagrange version of the error term). Take an $\cG$--measurable $W$ with $$ W \ge \frac{ 2 E[X^2  | \cG ]}{E[X|\cG]}$$ and
    $$ \left| \frac{X}{W} \right| \le \frac{1}{2} \,.$$ Such $W$ exists because $X$ has finite variance. We can simply take $$W=2 \max\left\{ \frac{  E[X^2  | \cG ]}{E[X|\cG]} , 2 |X| \right\}\,.$$
    As $|X/W| \le 1/2$, $\log(1+X/W)$ is everywhere defined. By the estimate \rref{IneqLog}, we obtain
    \begin{align*}
      E\left[ \log \left(1+\frac{X}{W}\right) | \cG \right] & \ge
      E\left[ \frac{X}{W} - \frac{2X^2}{W^2} | \cG \right]\\
      &= \frac{1}{W} \left( E\left[X | \cG \right] -  \frac{ E\left[  2X^2 | \cG \right]}{W}\right)\,,
    \end{align*}
    and now we can use the fact that $ W \ge \frac{ 2 E[X^2  | \cG ]}{E[X|\cG]}$ to conclude that
    \begin{align*}
      E\left[ \log \left(1+\frac{X}{W}\right) | \cG \right] & \ge 0\,.
    \end{align*}

  \end{proof}

As a consequence of the preceding lemma, the set
$$\Lambda = \left\{ \lambda \,\,\cG-\mbox{measurable} | \, 0<\lambda\le 1, E \left[ \log(1+\lambda X\right) | \cG \right] \ge 0 \}$$ is not empty. Let $\lambda_0$ be the $\cG$--essential supremum of $\lambda$. By definition, $\lambda_0$ is $\cG$--measurable and $\lambda_0 \ge \lambda$ for all $\lambda\in \Lambda$. Moreover, $\lambda_0$ is the smallest $\cG$--measurable random variable with these properties.

The set $\Lambda$ is  upwards directed: take $\lambda_1,\lambda_2\in\Lambda$. Then we have for $\lambda_3=\max\{\lambda_1,\lambda_2\}$
\begin{align*}
  E \left[ \log(1+\lambda_3 X) | \cG \right] &=
  1_{\{\lambda_1\ge \lambda_2\}} E \left[ \log(1+\lambda_1 X) | \cG \right] \\
  &+ 1_{\{\lambda_1< \lambda_2\}} E \left[ \log(1+\lambda_2 X) | \cG \right] \ge 0 \,.
\end{align*}
    The other properties being obvious, we conclude $\lambda_3 \in \Lambda$.
    Hence, $\Lambda$ is upwards directed; as a consequence, there exists a sequence $(\lambda_n)\subset \Lambda$ with $\lambda_n \uparrow \lambda_0$.

    Our next claim is $ E \left[ \log(1+\lambda_0 X) | \cG \right] \ge 0$.
 The sequence
$$Z_n = - \log\left(1+\lambda_n X\right)$$ is bounded from below by $- \log(1+|X|) \geq -|X| \in L^1$. We can then apply Fatou's lemma to conclude
$$- E \left[\log(1+\lambda_0 X) | \cG \right] =E \left[\lim  Z_n | \cG \right]  \le   \liminf - E \left[\log\left(1+\lambda_n X_n\right)| \cG \right] \le 0\,,$$
or
$$ E\left[ \log(1+\lambda_0 X) | \cG \right]\ge 0\,.$$

We claim now that we have \begin{equation}
  E \left[ \log\left(1+\lambda_0 X\right) | \cG \right] =0
\end{equation}
on the set $\{\lambda_0< 1\}$. This will conclude the proof of our lemma.

It is enough to establish the claim on all sets
$$ \Gamma_n = \left\{ \lambda_0 \le 1-\frac{1}{n} \right\}$$ for all $n \in \mathbb N$.
From now on, we work on this set only without stating it explicitly.
Let
$$A_{m,n} = \left\{ E \left[ \log\left(1+\lambda_0 X\right) | \cG \right] \ge \frac{1}{m} \right\}
\cap \Gamma_n \,.$$ We will show that $A_{m,n}$ is a null set for all $m,n\in \mathbb N$.

 Let $\epsilon=\frac{1}{1+mn}$ and set $\lambda_1=(1-\epsilon)\lambda_0+\epsilon$. Then we have $\lambda_1>\lambda_0$ and $\lambda_1 \le (1-\epsilon)(1-1/n) + \epsilon =1-1/n+\epsilon/n<1$. We also note \begin{equation}
   \label{EqnHelpPlus}
   \lambda_1-\lambda_0=\epsilon(1-\lambda_0) \le \epsilon.
 \end{equation}

We have
\begin{equation}
  \label{EqnHelpStar}
  1+\lambda_1 X \ge  1-\lambda_1 \ge \frac{1-\epsilon}{n}>0\,.
\end{equation}
$\log(1+\lambda_1X)$ is thus finite (on $\Gamma_n$ where we work).

We now want to show
$$  E \left[ \log\left(1+\lambda_1 X\right) | \cG \right] \ge 0 $$
on $A_{m,n}$. If $A_{m,n}$ was not a null set, this would contradict the definition of $\lambda_0$ as the $\cG$--essential supremum of $\Lambda$.

In order to establish the desired inequality, it is enough to show
$$  E \left[ \log\left(1+\lambda_1 X\right) | \cG \right]
-
 E \left[ \log\left(1+\lambda_0 X\right) | \cG \right]
\ge -\frac{1}{m} $$ because of the definition of $A_{m,n}$.
Now, on the set $\{X\ge 0\}$ we have $\log\left(1+\lambda_1 X\right) \ge \log\left(1+\lambda_0 X\right)$.

We need a uniform estimate for $\log\left(1+\lambda_1 X\right)-\log\left(1+\lambda_0 X\right)$ on the set $\{X<0\}.$
With the help of the mean value theorem, we obtain on $\{X<0\}$
$$  \log\left(\frac{1+\lambda_1 X}{1+\lambda_0 X}\right)  \ge - \frac{n}{1-\epsilon} \left(\lambda_1-\lambda_0 \right) \ge  - \frac{n\epsilon}{1-\epsilon}\,.$$
(How can one see this: by the mean value theorem and \rref{EqnHelpPlus}, we have
$$\log\left(1+\lambda_1 X\right)-\log\left(1+\lambda_0 X\right)
= \frac{1}{\xi} (\lambda_1-\lambda_0) X$$ for some $\xi $ in between $1+\lambda_1 X$ and $1+\lambda_0X$.
 By \rref{EqnHelpStar} $1+\lambda_1 X \ge (1-\epsilon)/n$. Hence, we have $0<1/\xi \le \frac{n}{1-\epsilon}$.)

 By the definition of $\epsilon$, we thus have
 $$  \log\left(\frac{1+\lambda_1 X}{1+\lambda_0 X}\right)  \ge - \frac{n\epsilon}{1-\epsilon} = -\frac{1}{m}$$ uniformly on $\{X<0\}$
 as desired. It follows
 \begin{align*} &  E \left[ \log\left(1+\lambda_1 X\right) | \cG \right]
-
 E \left[ \log\left(1+\lambda_0 X\right) | \cG \right]
 \\
&\ge E \left[ \log\left(\frac{1+\lambda_1 X}{1+\lambda_0 X}\right)1_{\{X<0\}} | \cG \right] \\
 &\ge
 -\frac{1}{m} \,.
 \end{align*}
\end{proof3}

\end{appendix}

\addcontentsline{toc}{section}{References}
\bibliography{paper1}


\end{document}